\newcommand*\samethanksMihaiSEpsilon[2][\value{footnote}]{\footnotemark[#1]}
\begin{document}

\begin{frontmatter}



\title{Characterization and Detection of  $\epsilon$-Berge Zhukovskii Equilibria}


\author{No\'{e}mi Gask\'{o}\thanks{all authors have equal contribution}, Mihai Suciu\samethanksMihaiSEpsilon1, Rodica Ioana Lung\samethanksMihaiSEpsilon1, D. Dumitrescu\samethanksMihaiSEpsilon1}

\address{Babe\c s-Bolyai University of Cluj-Napoca, Romania \\ 
No. 1 Mihail Kogalniceanu Street,\\
RO-400084 Cluj-Napoca, Romania }

\begin{abstract}
Berge equilibrium in the sense of Zhukovskii (Berge-Zhu\-kov\-skii) is an alternate solution concept in non-cooperative game theory that formalizes cooperation in a noncooperative setting. In this paper the $\epsilon$-Berge-Zhukovskii equilibrium is introduced and characterized by using a generative relation. A computational  method for detecting $\epsilon$-Berge-Zhukovskii equilibrium based on evolutionary multiobjective optimization algorithms is presented. Numerical examples are used to illustrate the results obtained.

\end{abstract}

\begin{keyword}
game theory \sep $\epsilon$-Berge-Zhukovskii equilibrium detection

\end{keyword}

\end{frontmatter}


\section{Introduction}

\newtheorem{definition}{Definition}
\newtheorem{remark}{Remark}
\newtheorem{proposition}{Proposition}
\newtheorem{example}{Example}
\newtheorem{proof}{Proof}

In non-cooperative game theory players make decisions independently  based on their own interests and different equilibrium concepts are used to provide decision makers an overview over possible outcomes of the game. The most popular equilibrium concept is the Nash equilibrium \cite{Nash1951} - a game situation from which no player has an incentive to unilaterally deviate. A more general equilibrium concept based on the notion of equilibrium for a partition with respect to a coalition was proposed by Berge \cite{Berge1957}. Zhukovskii \cite{Zhukovskii1994} formalized a particularization of the general Berge equilibrium and introduced it as an alternate solution concept that is complementary to Nash and more suited for games where the Nash equilibrium has no practical value (trust games \cite{trustgames}, taxation games \cite{RePEc:lam:wpaper:11-05,Curt2}). In literature the equilibrium defined by Zhukovskii is referred to as Berge equilibrium in the sense of  Zhukovskii \cite{larbani2008,Nessah2007} or 
simply the Berge equilibrium \cite{RePEc:lam:wpaper:11-05,colman2011}. In this paper we will refer to it as the Berge-Zhukovskii equilibrium (BZ). 

The Berge-Zhukovskii (BZ) equilibrium is a situation in which every player's strategy is stable against the deviations of \textit{all} other players. Existence theorems and characterizations for BZ equilibrium can be found in  \cite{abalo2005,Nessah2007,Abalo20061840,RePEc:wsi:igtrxx:v:14:y:2012:i:01:p:1250005-1-1250005-10} and \cite{larbani2008}. A connection between the BZ and the Nash equilibrium of several two person games that provides also a method to find the BZ in $n$-player games is presented in \cite{colman2011}. To the best of our knowledge the first computational method aimed to directly detect the Berge-Zhukovskii equilibrium is described in \cite{gasko_berge}.

The $\epsilon$-Nash equilibrium  introduced by Radner \cite{radner} can be viewed  as a weakening of the strict rationality - in this case it is enough to be ``near''  to the Nash equilibrium - or to  approximate the Nash equilibrium. The value of $\epsilon$ can be interpreted in several ways: measuring an uncertainty of selecting a strategy, measuring a supplementary cost of attending the equilibrium strategy, or a perturbation of the players rationality \cite{epsilon_gecco2009}.

The $\epsilon$-Berge-Zhukovskii equilibrium is introduced in a similar manner. The intuition behind is the same as in the case of the $\epsilon$-Nash equilibrium: the $epsilon$ gives a perturbation to the players strategies. 

A generative relation for a certain game equilibrium is a binary relation defined on strategy profiles with the property that the set of strategy profiles non-dominated with respect to that relation is identic with the set of that equilibria of the game (non-dominated strategies are those for which there does not exists 'better' ones with respect to the generative relation).

A generative relation for describing $\epsilon$-Berge-Zhukovskii equilibrium is proposed. The most important feature of this relation is that it can be used to redirect the search of an evolutionary multiobjective optimization algorithm towards the $\epsilon$-Berge-Zhukovskii equilibria of the game. 

The remaining of the paper is structured as follows: after a brief introduction in non-cooperative games Section two, the Berge-Zhukovskii and $\epsilon$-Berge-Zhukovskii equilibrium are presented. Section three introduces the generative relation used in the equilibrium detection method. Section four gives a short introduction in multiobjective optimization. In Section five numerical experiments are described. The paper ends with conclusions.


\section{Berge-Zhukovskii and $\epsilon$-Berge-Zhukovskii equilibrium}

A finite strategic non-cooperative game can be formalized as a system $$G=((N,S_{i},u_{i}),i=1,...,n),$$ where:
\begin{itemize}
\item $N$ represents a set of players, and $n$ is the number of players;

\item $S_{i}$ is the set of actions available to player $i\in N$,   $$S=S_{1} \times S_{2} \times ... \times S_{n}$$
is the set of all possible situations of the game, and  $s=(s_1,...s_n) \in S$ is a strategy (or strategy profile) of the game;

\item for each player $i \in N$, $u_{i}:S \rightarrow R$ represents the payoff function of player $i$.

\end{itemize}

We will denote by  $S_{-i}=S_1 \times ... \times S_{i-1} \times S_{i+1} \times ... \times S_{n}$,  $s_{-i}=(s_1,...,s_{i-1},s_{i+1},...,s_n)$ and $( s^*_{i},s_{-i}^{})=(s_1,s_2,...,s_{i}^*,...,s_n).$

The Berge equilibrium is formally defined as follows:

\begin{definition}[Berge equilibrium]
Let $M$ be a finite set of indices. Denote by  $P=\{P_t\}, t\in M$ a partition of $N$ and $R=\{R_t\}, t \in M$ be a set of subsets of $N$. A strategy profile $s^{*} \in S$ is an equilibrium strategy for the partition $P$  with respect to the set $R$, or simply  a Berge equilibrium strategy, if and only if the condition
$$ u_{p_m}(s^*) \geq u_{p_m}(s_{R_m}^{},s^*_{N-R_m})$$
holds for each given $m \in M,$ any $p_m \in P_m$ and $s_{R_m} \in S_{R_m}.$
\label{def_general}
\end{definition}

\begin{remark}
 If $P=\{\{i\}:i\in N\}$ and $S=\{\{i\}:i\in N\}$ it is clear that the Nash equilibrium is a Berge equilibrium for $P$ relative to $S$, so every Nash equilibrium is a Berge equilibrium. A Berge equilibrium which is also a Nash equilibrium is called Nash-Berge equilibrium \cite{abalo2005}. 
\end{remark}

 If we consider each class $P_i$ of the partition $P$ consists from a player $i$ and each set of  $R_i$ is the set $N$ of players except $i$, we have  $M=N$, $P_i=\{i\}$ and $R_i=N-{i}, \forall i \in N,$ we obtain the Berge-Zhukovskii equilibrium.

Playing in Berge-Zhukovskii sense means that each player wants to maximize the payoff of the other players.  This equilibrium concept can be interpreted as capturing cooperation in a non-cooperative game. Formally we write:

\begin{definition}[Berge-Zhukovskii]
A strategy profile $s^{*} \in S$ is a Berge-Zhu\-kov\-skii equilibrium if the inequality
$$u_i(s^*) \geq u_i( s^*_{i},s_{-i}^{})$$
holds for each player $i=1,...,n,$ and all $s_{-i} \in S_{-i}.$
\end{definition}

The strategy $s^*$ is a Berge-Zhu\-kov\-skii equilibrium, if  the payoff of each player $i$ does not decrease considering any deviation of the other $N-\{i\}$ players.

\begin{example}
Let us consider the Prisoner's Dilemma game presented in Table \ref{table:pd}.

\begin{table}
\caption{The payoff functions of the two players in Prisoner's Dilemma}
\begin{center}
  \begin{tabular}{ l | c | c | c |}
  \multicolumn{4}{c}{\hspace{2.5cm}Player 2} \\ \cline{2-4}
  & & Cooperate  & Defect  \\ \cline{2-4}

{Player 1}  & Cooperate & (2, 2)  & (0, 3)  \\  \cline{2-4}
          &  Defect & (3, 0) & (1, 1) \\ \cline{2-4}

  \end{tabular}
\end{center}
\label{table:pd}
\end{table}

The game has one pure Nash equilibrium  \emph{(Defect, Defect)}, which does not ensure the highest possible payoff for the two players.
In contrary to this, the Berge-Zhukovskii equilibrium of the game is \emph{(Cooperate, Cooperate)}, which may be a better solution for both players.
\end{example}

Inspired by the notion of $\epsilon$-Nash equilibrium, the $\epsilon$-Berge-Zhukovskii equilibrium is introduced. The new equilibrium concept gives a flexibility to the standard Berge-Zhukovskii equilibrium.

The formal definition is the following:

\begin{definition}[$\epsilon$-Berge-Zhukovskii equilibrium]
A strategy profile $s^{*} \in S$ is an $\epsilon$- Berge-Zhu\-kov\-skii equilibrium if the inequality
$$u_i(s^*) \geq u_i( s^*_{i},s_{-i}^{})-\epsilon, \epsilon>0$$
holds for each player $i=1,...,n,$ and $s_{N-i} \in S_{-i}.$
\end{definition}

\begin{remark}
 If $\epsilon=0$ the $\epsilon$-BZ is actually the Berge-Zhukovskii equilibrium.
\end{remark}

We denote by $BZ_{\epsilon}$ the set of all $\epsilon$- Berge-Zhu\-kov\-skii equilibria of the game.


\section{Generative relation for $\epsilon$-Berge-Zhukovskii equilibrium}

Generative relations \cite{Lung2008} are used to characterize a certain equilibrium.  Furthermore, they may be used within optimization heuristics for fitness assignment purposes in order to guide their search to the desired equilibrium type.

The first generative relation was introduced for the Nash equilibrium detection \cite{Lung2008}. A generative relation for the detection of Berge-Zhukovskii equilibrium is introduced in \cite{gasko_berge}.


Consider two strategy profiles $s$ and $q$ from $S$. Denote by $b_{\epsilon}(s,q)$ the number of players who lose (with a deviation of $\epsilon$) by remaining to the initial strategy $s$, while the other players are switching their strategies to $q$ if they are all different from $s$. 

We may express $b_{\epsilon}(s,q)$ as:


$$b_{\epsilon}(s,q)=card\{i \in N, u_i(s) < u_i( s_{i}^{},q_{-i})+\epsilon, s_{-i} \textrm{\textdoublebarpipe} q_{-i}\},$$

where $card\{M\}$ denotes the cardinality of the set $M$ and $s_{-i} \textrm{\textdoublebarpipe} q_{-i} \Longleftrightarrow s_j\neq q_j$ for all $j=1,...,n, j\neq  i$.

The intuition behind the construction of $b_\epsilon$ is that, in the search form $\epsilon$-BZ we try to minimize the number of players whose payoff would increase when all the others switch to other strategies. Two strategy profiles may be compared by using the following relation:

\begin{definition}
Let $s,q \in S$.
We say the strategy $s$ is better than strategy $q$ ($s$ dominates $q$) with respect to $\epsilon$-Berge-Zhu\-kov\-skii equilibrium, and we write $s \prec_{B_{\epsilon}}q $, if and only if the inequality  $$b_{\epsilon}(s,q)<b_{\epsilon}(q,s)$$
holds, i.e. there are less players that would benefit when all the others change their strategies from $s$ to $q$ than from $q$ to $s$.
\end{definition}

\begin{remark}
If $b_{\epsilon}(s,q)=b_{\epsilon}(q,s)$ then we consider $s$ and $q$ to be \textit{indifferent} to each other with respect to the $\prec_{B_{\epsilon}}$ relation .
\end{remark}

\begin{definition}
The strategy profile $s^{*} \in S$ is an $\epsilon$-Berge-Zhu\-kov\-skii non-do\-mi\-na\-ted strategy ($BZN_{\epsilon}$), if and only if there is no strategy $s \in S, s \neq s^{*}$ such that $s$ dominates $s^{*}$ with respect to $\prec_{B_{\epsilon}}$ i.e.
 $$s\prec_{B_{\epsilon}} s^{*}.$$
\end{definition}

We may consider relation $\prec_{B_{\epsilon}}$ as a candidate for generative relation of the $\epsilon$-Berge-Zhu\-kov\-skii equilibrium. What we need to prove is that the set of the non-dominated strategies with respect to the relation $\prec_{B_{\epsilon}}$ equals  the set of $\epsilon$-Berge-Zhu\-kov\-skii equilibria of the game. In this case, $\prec_{B_{\epsilon}}$ could be used to compare strategy profiles and guide the search of heuristics such as evolutionary algorithms towards the $\epsilon$-BZ equilibria.

\begin{proposition}
\label{bz0}
If a strategy profile $s^{*} \in S$ is an $\epsilon$- Berge-Zhu\-kov\-skii equilibrium  then the inequality $$b_{\epsilon}(s^{*},s)=0$$
holds, for all $s \in S$.
\end{proposition}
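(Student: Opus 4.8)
The plan is to verify the statement one player at a time. Fix an arbitrary profile $s \in S$; I would show that no index $i \in N$ can lie in the set whose cardinality is $b_\epsilon(s^*,s)$, which immediately yields $b_\epsilon(s^*,s)=0$, and since $s$ is arbitrary this proves the proposition.

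So fix $i \in N$ and distinguish two cases according to the side condition $s^*_{-i}\,\textrm{\textdoublebarpipe}\,s_{-i}$. If $s^*_{-i}$ and $s_{-i}$ agree in some coordinate $j \neq i$, the condition $s^*_{-i}\,\textrm{\textdoublebarpipe}\,s_{-i}$ fails, so $i$ is excluded from the counting set by the very definition of $b_\epsilon$. Otherwise $s^*_{-i}$ and $s_{-i}$ differ in every coordinate; in particular $s_{-i}$ is an admissible deviation, $s_{-i}\in S_{-i}$. For $i$ to be counted, the other players switching from $s^*_{-i}$ to $s_{-i}$ would have to increase $i$'s payoff by more than $\epsilon$, i.e.\ $u_i(s^*) < u_i(s^*_i,s_{-i})-\epsilon$. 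But this is exactly negated by the defining inequality of the $\epsilon$-Berge-Zhukovskii equilibrium applied to this particular $s_{-i}$, namely $u_i(s^*)\ge u_i(s^*_i,s_{-i})-\epsilon$. Hence $i$ is again excluded, and the counting set is empty.

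I do not expect a real obstacle here: the arithmetic is a single invocation of the definition of the $\epsilon$-BZ equilibrium, and the proposition is essentially that definition re-expressed through the function $b_\epsilon$. The only point needing a little attention is the bookkeeping of the coordinatewise-distinctness condition $\textrm{\textdoublebarpipe}$ — one has to observe that a player whose ``others'' profile already coincides with $s^*$ in some coordinate is trivially not counted, so the payoff inequality only needs checking for profiles $s_{-i}$ differing from $s^*_{-i}$ everywhere, and for exactly those profiles the $\epsilon$-BZ condition applies directly.
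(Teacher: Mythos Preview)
Your argument follows the same line as the paper's: both observe that any index in the counting set defining $b_\epsilon(s^*,s)$ would furnish an $s_{-i}$ violating the $\epsilon$-BZ inequality. The paper writes this as a one-sentence contradiction (assume $b_\epsilon(s^*,s)>0$, pick a witness $i$, contradiction), while you give the direct per-player check and make the $\textrm{\textdoublebarpipe}$ side condition explicit, which the paper's proof omits.

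One point to flag: you state the counting condition as $u_i(s^*)<u_i(s^*_i,s_{-i})-\epsilon$, but the paper defines $b_\epsilon$ with the opposite sign, namely $u_i(s)<u_i(s_i,q_{-i})+\epsilon$. With the paper's sign, the $\epsilon$-BZ inequality $u_i(s^*)\ge u_i(s^*_i,s_{-i})-\epsilon$ does not actually exclude $i$ from the set (take $u_i(s^*)=u_i(s^*_i,s_{-i})$, for which both hold). The paper's own proof glosses over the same mismatch. Your $-\epsilon$ is the sign under which the proposition is true, so the discrepancy lies in the paper's definitions rather than in your reasoning.
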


\begin{proof}

Let $s^{*} \in BZ_{\epsilon}$. Suppose there exists a strategy profile $s \in S$, such that $b_{\epsilon}(s^{*},s)=w$, $w >0$.
Therefore there exists $i \in N$, such that $$u_{i}(s^*,s_{-i}^{})+{\epsilon}> u_i(s^*).$$
This contradicts the definition of the $\epsilon$- Berge-Zhu\-kov\-skii equilibrium. Hence $b_{\epsilon}(s^*,s)=0.$
\end{proof}

\begin{proposition}
\label{eq:bz1}
All $\epsilon$-BE equilibrium strategies are $\epsilon$-Berge-Zhu\-kov\-skii non-do\-mi\-na\-ted strategies, i.e.
$$BZ_{\epsilon}\subseteq BZN_{\epsilon}.$$
\end{proposition}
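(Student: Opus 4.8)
The plan is to show the inclusion $BZ_\epsilon \subseteq BZN_\epsilon$ directly: take an arbitrary $s^* \in BZ_\epsilon$ and prove it is non-dominated with respect to $\prec_{B_\epsilon}$. First I would invoke Proposition~\ref{bz0}, which already gives $b_\epsilon(s^*, s) = 0$ for every $s \in S$. The only remaining task is to rule out the existence of some $s \in S$, $s \neq s^*$, with $s \prec_{B_\epsilon} s^*$, i.e.\ with $b_\epsilon(s, s^*) < b_\epsilon(s^*, s)$.

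The key step is purely arithmetic: since $b_\epsilon(s^*, s) = 0$ and $b_\epsilon$ is by definition a cardinality, hence a non-negative integer, the strict inequality $b_\epsilon(s, s^*) < b_\epsilon(s^*, s) = 0$ would force $b_\epsilon(s, s^*) < 0$, which is impossible. Therefore no such dominating $s$ exists, and $s^*$ is $\epsilon$-Berge-Zhukovskii non-dominated. Since $s^*$ was an arbitrary element of $BZ_\epsilon$, this establishes $BZ_\epsilon \subseteq BZN_\epsilon$.

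I do not expect a genuine obstacle here; the content of the proposition is essentially a one-line consequence of Proposition~\ref{bz0} together with the non-negativity of the counting function $b_\epsilon$. The only point requiring a little care is to state explicitly that $b_\epsilon(s^*, s) = 0$ holds for \emph{all} $s$ (so in particular for any candidate dominator), rather than for a single fixed $s$; Proposition~\ref{bz0} is phrased with exactly that universal quantifier, so this is immediate. It is worth noting that the converse inclusion $BZN_\epsilon \subseteq BZ_\epsilon$ is the substantive half of the characterization and would be handled separately; the present statement is only the easy direction.
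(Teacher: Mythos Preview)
Your proposal is correct and follows essentially the same argument as the paper: assume $s^* \in BZ_\epsilon$, invoke Proposition~\ref{bz0} to get $b_\epsilon(s^*,s)=0$ for all $s$, and note that a hypothetical dominator would force $b_\epsilon(s,s^*)<0$, contradicting non-negativity of the cardinality. Your additional remarks about the universal quantifier and about the converse being the substantive direction are accurate.
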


\begin{proof}
Let $s^{*} \in BZ_{\epsilon}$. Suppose $s^{*}$ is dominated. Therefore there exists a strategy profile $s \in S$ dominating $s^*$:
$$s \prec_{B_{\epsilon}} s^*.$$
From definition of the relation $\prec_{B_{\epsilon}}$ we have $$b_{\epsilon}(s,s^{*})<b_{\epsilon}(s^{*},s).$$

As $s^*$ is an $\epsilon$-Berge-Zhu\-kov\-skii equilibrium from Prop. \ref{bz0} it follows that
$$b_{\epsilon}(s^{*},s)=0.$$
Thus we have
$$b_{\epsilon}(s,s^{*})<0.$$
But this is not possible, because $b_{\epsilon}(s,s^{*})$ denotes the cardinality of a set. Therefore $s^*$ is from $BZN_{\epsilon}$ (i.e. non-dominated).
\end{proof}

\begin{proposition}
\label{eq:bz2}
All  $\epsilon$-Berge-Zhu\-kov\-skii non-do\-mi\-na\-ted strategies are $\epsilon$-BZ equilibrium strategies, i.e.
$$BZN_{\epsilon}\subseteq BZ_{\epsilon}.$$
\end{proposition}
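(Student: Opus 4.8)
The plan is to prove the inclusion by contraposition: I will show that if $s^{*}$ is \emph{not} an $\epsilon$-Berge--Zhukovskii equilibrium, then it is dominated with respect to $\prec_{B_{\epsilon}}$, hence $s^{*}\notin BZN_{\epsilon}$. (Unlike Proposition~\ref{eq:bz1}, here we cannot invoke Proposition~\ref{bz0}, since $s^{*}\notin BZ_{\epsilon}$, so the argument must produce a dominating profile by hand.) Suppose then $s^{*}\in S\setminus BZ_{\epsilon}$. Negating the defining inequality of the $\epsilon$-BZ equilibrium, there is a player $i$ and a profile $s_{-i}\in S_{-i}$ with
$$u_{i}(s^{*})< u_{i}(s^{*}_{i},s_{-i})-\epsilon .$$
Because $\epsilon>0$, this already forces $(s^{*}_{i},s_{-i})\neq s^{*}$, so $s_{-i}$ differs from $s^{*}_{-i}$ in at least one coordinate; the one technical refinement (discussed below) is to arrange that it differs in \emph{every} coordinate $j\neq i$.

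The key step is to exhibit the dominating profile, and the right choice is $\widehat{s}:=(s^{*}_{i},s_{-i})$, i.e.\ keep player $i$ at $s^{*}_{i}$ and let the remaining players move to $s_{-i}$. I would then compute the two counters. On one hand $b_{\epsilon}(s^{*},\widehat{s})\geq 1$, since player $i$ is counted: $u_{i}(s^{*})<u_{i}(s^{*}_{i},s_{-i})-\epsilon<u_{i}(s^{*}_{i},\widehat{s}_{-i})+\epsilon$, and $\widehat{s}_{-i}=s_{-i}$ differs from $s^{*}_{-i}$ in all coordinates. On the other hand $b_{\epsilon}(\widehat{s},s^{*})=0$: for $k=i$ the required inequality $u_{i}(\widehat{s})<u_{i}(\widehat{s}_{i},s^{*}_{-i})+\epsilon=u_{i}(s^{*})+\epsilon$ fails, because $u_{i}(\widehat{s})=u_{i}(s^{*}_{i},s_{-i})>u_{i}(s^{*})+\epsilon$; and for every $k\neq i$ the coordinate-disjointness requirement built into $b_{\epsilon}$ fails, since $\widehat{s}$ and $s^{*}$ agree on coordinate $i$. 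Hence no player contributes to $b_{\epsilon}(\widehat{s},s^{*})$. Combining, $b_{\epsilon}(\widehat{s},s^{*})=0<b_{\epsilon}(s^{*},\widehat{s})$, so $\widehat{s}\prec_{B_{\epsilon}}s^{*}$, contradicting $s^{*}\in BZN_{\epsilon}$. This yields $BZN_{\epsilon}\subseteq BZ_{\epsilon}$, and together with Proposition~\ref{eq:bz1} it shows that $\prec_{B_{\epsilon}}$ is a generative relation for the $\epsilon$-BZ equilibrium.

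I expect the only genuine obstacle to be the ``all coordinates differ'' clause built into $b_{\epsilon}$: player $i$ contributes to $b_{\epsilon}(s^{*},\widehat{s})$ only when $s_{-i}$ differs from $s^{*}_{-i}$ in \emph{every} coordinate, whereas the failure of the $\epsilon$-BZ condition a priori supplies a deviation differing in at least one coordinate. For two-player games this gap is vacuous — any genuine deviation of the single opponent already differs from $s^{*}$ — which covers the examples treated in the paper; in the general case one argues that a fully-deviating witness can be chosen (e.g.\ when every $S_{j}$ is rich enough to perturb the irrelevant coordinates without destroying the strict inequality). The remaining work is a careful bookkeeping of which indices $b_{\epsilon}$ counts, where the decisive observation is that pinning player $i$ to $s^{*}_{i}$ simultaneously cancels player $i$'s own contribution to $b_{\epsilon}(\widehat{s},s^{*})$ (the payoff inequality reverses) and that of every other player (coordinate $i$ becomes common to both profiles).
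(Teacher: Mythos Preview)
Your argument is essentially the paper's own: both prove the contrapositive by taking a witness $s_{-i}$ to the failure of the $\epsilon$-BZ condition, setting $q=\widehat{s}=(s^{*}_{i},s_{-i})$, and then verifying $b_{\epsilon}(s^{*},q)\geq 1$ while $b_{\epsilon}(q,s^{*})=0$ via the observation that $q$ and $s^{*}$ share coordinate $i$ (killing the disjointness clause for every $j\neq i$) together with the reversed payoff inequality at $j=i$. You are in fact more careful than the paper on one point: the paper simply asserts $s^{*}_{-i}\,\textrm{\textdoublebarpipe}\,s_{-i}$ as part of the negated hypothesis without justification, whereas you correctly flag this coordinate-disjointness requirement as the only genuine obstacle and note that it is vacuous for two-player games.
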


\begin{proof}
Let us consider $s^* \in BZN$ ($s^*$ is a non-dominated strategy profile) and suppose that $s^* \not\in BZ.$

If $s^* \not\in BZ$ $\Rightarrow \exists s_{-i}$ such that 
\begin{equation}
 u_i(s^*)< u_i(s^*,s_{-i})+\epsilon, \label{nedom}
\end{equation}
and  $s^{*}_{-i} \textrm{\textdoublebarpipe} s_{-i}.$

Let us denote by $q$ the strategy profile $(s^*,s_{-i}).$

We have: $$b_{\epsilon}(s^*,q)=card\{j \in N, u_j(s^*)<u_j(s^*_{j}, q_{-j})+\epsilon, s^*_{-j} \textrm{\textdoublebarpipe} q_{-j} \}.$$

But for all $ j \neq i$ we have $s^*_i=q_i$ so $s^*_{-j} \textrm{\textdoublebarpipe} q_{-j}$ does not hold, and for $j=i$ relation (\ref{nedom}) holds,  therefore $b_{\epsilon}(s^*,q)=1.$

On the other hand $$b_{\epsilon}(q,s^*)=card\{j \in N, u_j(q)<u_j(q_j, s^{*}_{-j}) +\epsilon, q_{-j} \textrm{\textdoublebarpipe} s^*_{-j} \}.$$

If $i \neq j$ $s^*_{i}=q_{i}$ so $q_{-j} \textrm{\textdoublebarpipe} s^*_{-j}$ does not hold.

If $i =j$ we have $(q_i, s^*_{-i})=s^*,$ therefore $u_i(q)<u_i(q_i, s^*_{-i}),$ if and only if $$u_i(q)<u_i(s^*)+\epsilon$$ if $$ u_i(s^*,s_{-i})<u_i(s^*)+\epsilon$$ which would contradict relation (\ref{nedom}). Therefore  $b_{\epsilon}(q,s^*)=0.$

We have that $b_{\epsilon}(s^*,q)>b_{\epsilon}(q,s^*)$, which contradicts the assumption of non-domination, therefore $s^*$ is an $\epsilon$-BZ equilibrium.
\end{proof}

\begin{proposition}
All $\epsilon$-Berge-Zhu\-kov\-skii equilibria are $\epsilon$-Berge-Zhu\-kov\-skii non-dominated strategies and all $\epsilon$-Berge-Zhu\-kov\-skii non-dominated strategies are $\epsilon$-Berge-Zhu\-kov\-skii equilibria: $$ BZ_{\epsilon}=BNS_{\epsilon}.$$
\end{proposition}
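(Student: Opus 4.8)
The plan is to obtain the claim directly as the conjunction of the two set inclusions already established, via the standard double-inclusion argument. First I would invoke Proposition \ref{eq:bz1}, which gives $BZ_{\epsilon}\subseteq BZN_{\epsilon}$: every $\epsilon$-Berge-Zhukovskii equilibrium is non-dominated with respect to $\prec_{B_{\epsilon}}$. Then I would invoke Proposition \ref{eq:bz2}, which supplies the reverse inclusion $BZN_{\epsilon}\subseteq BZ_{\epsilon}$: every $\prec_{B_{\epsilon}}$-non-dominated profile is an $\epsilon$-Berge-Zhukovskii equilibrium. Putting the two together yields $BZ_{\epsilon}=BZN_{\epsilon}$, which is precisely the asserted equality (the symbol $BNS_{\epsilon}$ in the statement being a typographical variant of $BZN_{\epsilon}$).

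Since both inclusions are already in hand, essentially no further computation is required; the proof is a one-line corollary. The only thing worth checking is that the two earlier propositions are phrased for the same relation $\prec_{B_{\epsilon}}$ and for the same pair of sets $BZ_{\epsilon}$ (the $\epsilon$-Berge-Zhukovskii equilibria, as in Definition of the $\epsilon$-Berge-Zhukovskii equilibrium) and $BZN_{\epsilon}$ (the $\prec_{B_{\epsilon}}$-non-dominated profiles). They are, so the bookkeeping is immediate.

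Accordingly, there is no real obstacle here: the mathematical content lives entirely in Propositions \ref{eq:bz1} and \ref{eq:bz2} (and, through the latter, in Proposition \ref{bz0}), and this final proposition merely records their combination. I would, however, emphasize in the write-up why this equality matters: it is exactly the property that certifies $\prec_{B_{\epsilon}}$ as a \emph{generative relation} for the $\epsilon$-Berge-Zhukovskii equilibrium, which is what licenses using $\prec_{B_{\epsilon}}$ as a fitness-comparison operator to steer an evolutionary multiobjective search toward $BZ_{\epsilon}$.
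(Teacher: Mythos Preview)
Your proposal is correct and matches the paper's own proof essentially verbatim: the paper simply writes ``Directly from Proposition~\ref{eq:bz1} and Proposition~\ref{eq:bz2}.'' Your identification of $BNS_{\epsilon}$ as a typographical variant of $BZN_{\epsilon}$ is also accurate.
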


\begin{proof}
Directly from Proposition \ref{eq:bz1} and Proposition \ref{eq:bz2}.
\end{proof}

\section{Evolutionary detection method}

Solving a multiplayer game in which players seek to maximize their payoffs has many common features with the Multiobjective Optimization Problem (MOP). In both cases the goal is to maximize the payoffs/objective functions.

For a MOP the \textit{m} objectives are represented by a set $\{u_i\}_{i\in \{1, ..., m\}}$ of functions where $u_i : S \rightarrow\mathbb{R}$ maps a solution \textit{s} from the decision space $S\subseteq\mathbb{R}^n$ to the objective space $\mathbb{R}$. $F:S\rightarrow\mathbb{R}^m$ represents the objective vector that needs to be maximized $F(s)=(u_1(s), u_2(s), ..., u_m(s))$. 

Usually the ideal solution vector $s$ that optimizes simultaneously all objective functions does not exist. When solving a multiobjective problem usually one does not find a single solution that best approximate $F$ but a set of solutions that approximate the \textit{Pareto optimal set}. The Pareto optimal set is formed by the \textit{Pareto non-dominated} solutions which represent the best trade-offs among the \textit{m} objectives.
Thus two solutions are compared using the Pareto dominance relation: for any two decision vectors $s,s'\in S$ we say that $s$ is better than $s'$ or $s$ \textit{dominates} $s'$, if $u_i(s)\geq u_i(s') \forall i\in\{1, ..., m\}$ and $\exists j\in\{1, ..., m\}$ such that $u_j(s)>u_j(s')$.

Evolutionary Multiobjective Algorithms have been successfully used for solving such problems \cite{Coello2007}, \cite{Zhou2011}. As they are population based metaheuristics they represent a good choice because by evolving a set of possible solutions (the population) a good approximation of the Pareto front can be found in a single run while the shape, continuity or other mathematical properties of the true front do not hinder the search. 

The task of the optimization algorithm is to find a good approximation of the Pareto optimal set while maintaining a good diversity in the population. In Pareto based multiobjective optimization algorithms the search is driven by the Pareto dominance relation: if a new generated solution Pareto dominates a solution from the current population it replaces it in the next population. 

For detecting the $\epsilon$-Berge-Zhukovskii equilibrium any Pareto based multiobjective algorithm is suitable. The only modification needed is the replacement of the Pareto dominance relation whenever it is used during the search 
with the relation $\prec_{B_{\epsilon}}$.

\section{Numerical experiments}
The computation of $\epsilon$-Berge-Zhukovskii equilibria is illustrated for two examples of games with two and three players. These games correspond to a multiple objective optimization problem where the problem objectives are represented by the payoff of each player but searching for a different \textit{solution concept}.

For detecting $\epsilon$ -Berge-Zhukovskii equilibria a modified version of the Nondominated Sorting Genetic Algorithm II (\textit{NSGA-II}) \cite{Deb2002} algorithm is considered. \textit{NSGA-II}  is an evolutionary multiobjective optimization algorithm 
based on the Pareto dominance. The \textit{NSGA-II} algorithm is modified by replacing the Pareto dominance procedure with the generative relation $\prec_{B_{\epsilon}}$. A new version of the \textit{NSGA-II} method  for $\epsilon$-Berge-Zhukovskii equilibrium detection called \textit{BZ-NSGA-II} is obtained.

For all the tests a population of 150 individuals is used for 150 generations of the \textit{BZ-NSGA-II}. As for the variation operators we use a distribution indexes for mutation and crossover $\eta _m=20$ and $\eta _c=20$.

The obtained results are illustrated in the following manner: for two or three players the payoffs space is represented by assigning axes to the payoff of each player and representing solutions as points in the two and three dimensional spaces respectively. In each graphic the set of $\epsilon$-BZ is represented with gray color.

\subsection{Experiment 1}

Let us consider the two-person continuous game $G_1$ \cite{nessah_tian_working}, having the following payoff functions:
\[
 \left.\begin{aligned}
        &u_1(s_1,s_2) = -s_{1}^2-s_{1}+s_2,\\
        &u_2(s_1,s_2) = 2s_{1}^2+3s_1-s_{2}^2-3s_2,\\
        &s_i \in [-2,1], i=1,2.
       \end{aligned}
 \right.
\]

The Berge-Zhukovskii equilibrium of the game is $(1,1)$ with the corresponding payoffs $(-1,1).$ Figures \ref{g1_e01}-\ref{g1_e09} present the detected $\epsilon$-Berge-Zhukovskii equilibria for $\epsilon \in \{0, 0.1, 0.2, 0.5, 0.9\}$. 


\begin{figure}[t]
\begin{minipage}[b]{0.48\linewidth}
	\includegraphics[scale=0.4]{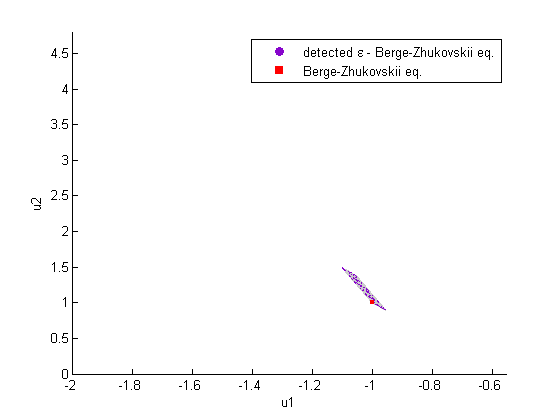}
   	\caption{$G_1$, $\epsilon=0.1$.  Theoretical $\epsilon$-BZ are represented in gray. Detected solutions cover the theoretical front efficiently }\label{g1_e01}
\end{minipage}
\hspace{0.04\linewidth}
\begin{minipage}[b]{0.48\linewidth}
	\includegraphics[scale=0.4]{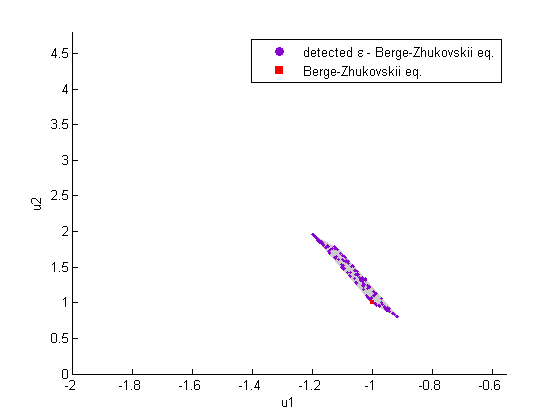}
    \caption{$G_1$, $\epsilon=0.2$.  Increasing the value of $\epsilon$ leads to a larger set of $\epsilon$-BZ which is covered by the detected solutions.}\label{g1_e02}
\end{minipage}
\end{figure}
\begin{figure}[t]
\begin{minipage}[b]{0.48\linewidth}
	\includegraphics[scale=0.4]{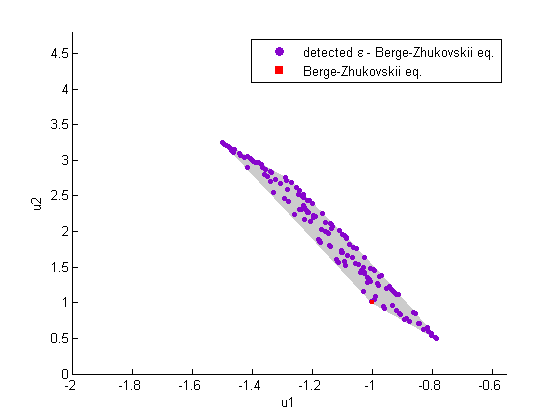}
   	\caption{$G_1$, $\epsilon=0.5$. Increasing $\epsilon$ enlarges the set of equilibria, but it is still well covered by detected solutions.}\label{g1_e05}
\end{minipage}
\hspace{0.04\linewidth}
\begin{minipage}[b]{0.48\linewidth}
	\includegraphics[scale=0.4]{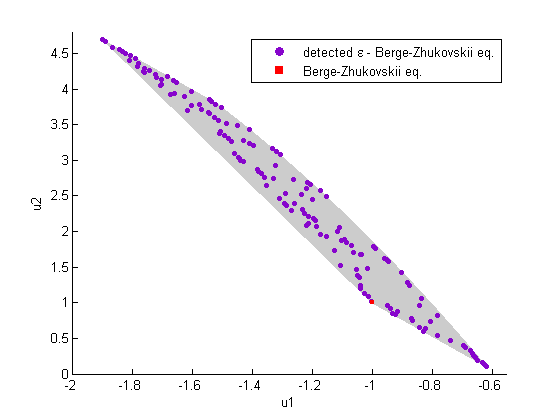}
    \caption{$G_1$, $\epsilon=0.9$. Even for $\epsilon=0.9$ $BZ-NSGA-II$ is capable to compute a reasonable set of $\epsilon$-BZ solutions. }\label{g1_e09}
\end{minipage}
\end{figure}

\subsection{Experiment 2 - Voluntary contribution mechanism}

The Voluntary Contribution Mechanism (\textit{VCM}) is a good example illustrating that people are not totally self-interested. They spend time making something for the common good. 
Theoretical studies \cite{Bochet2006} and experiments \cite{Fischbacher2001,Muller2008} are made concerning player behavior. A model of the \textit{VCM} is described as game $G_2$:
\[u_i(s)=10-s_i+0.4\sum_{i=1,n}s_i, s_i \in [0,10], i=1,...,n.\]

In this game the Berge-Zhukovskii equilibrium is achieved when all players play strategy $10$, which means they spend all for the public good. Detected $\epsilon$-Berge-Zhukovskii equilibria for the two-player version of the \textit{VCM} game are presented in Figures \ref{g2_e01d}, \ref{g2_e02d}, \ref{g2_e05d} and \ref{g2_e09d}. Results obtained for the three player version of the game are depicted in Figures \ref{g3_e01}, \ref{g3_e02}, \ref{g3_e05}  and \ref{g3_e09}.


\begin{figure}[t]
\begin{minipage}[b]{0.48\linewidth}
	\includegraphics[scale=0.4]{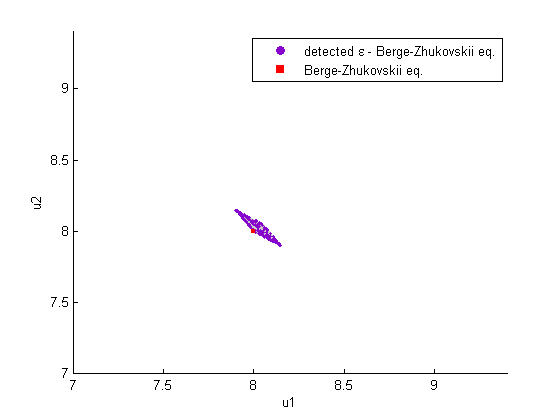}
   	\caption{Voluntary Contribution Mechanism, evolutionary detected solutions for $\epsilon=0.1$.}\label{g2_e01d}
\end{minipage}
\hspace{0.04\linewidth}
\begin{minipage}[b]{0.48\linewidth}
	\includegraphics[scale=0.4]{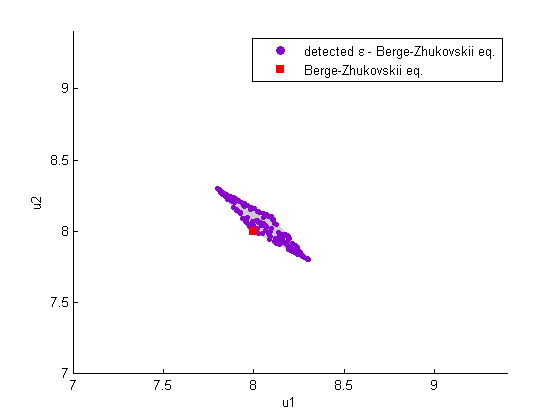}
    \caption{Voluntary Contribution Mechanism, evolutionary detected solutions for $\epsilon=0.2$.}\label{g2_e02d}
\end{minipage}
\end{figure}
\begin{figure}[t]
\begin{minipage}[b]{0.48\linewidth}
	\includegraphics[scale=0.4]{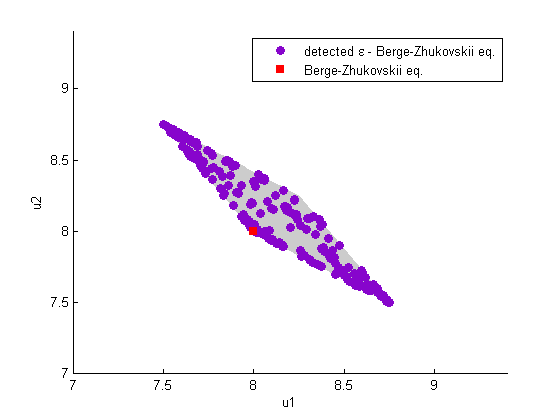}
   	\caption{Voluntary Contribution Mechanism, evolutionary detected solutions for $\epsilon=0.5$.}\label{g2_e05d}
\end{minipage}
\hspace{0.04\linewidth}
\begin{minipage}[b]{0.48\linewidth}
	\includegraphics[scale=0.4]{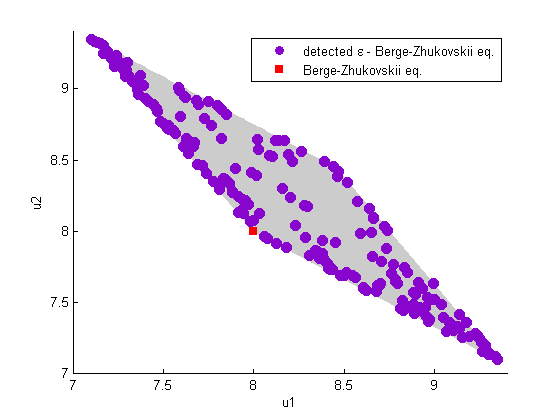}
    \caption{Voluntary Contribution Mechanism, evolutionary detected solutions for $\epsilon=0.9$.}\label{g2_e09d}
\end{minipage}
\end{figure}



\begin{figure}[t]
\begin{minipage}[b]{0.48\linewidth}
	\includegraphics[scale=0.4]{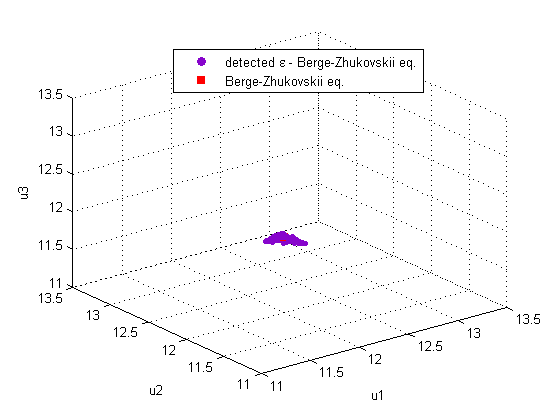}
   	\caption{Voluntary Contribution Mechanism and evolutionary detected solutions for $\epsilon=0.1$.}\label{g3_e01}
\end{minipage}
\hspace{0.04\linewidth}
\begin{minipage}[b]{0.48\linewidth}
	\includegraphics[scale=0.4]{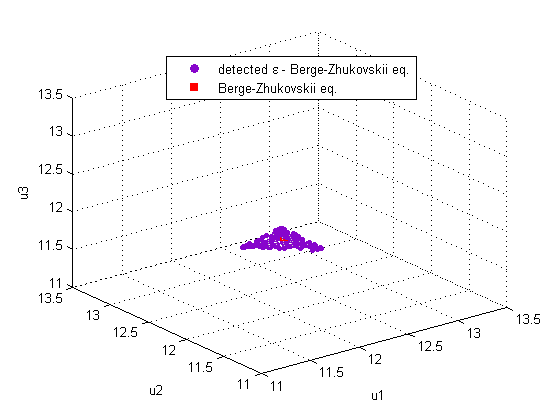}
    \caption{Voluntary Contribution Mechanism and evolutionary detected solutions for $\epsilon=0.2$.}\label{g3_e02}
\end{minipage}
\end{figure}

\begin{figure}[t]
\begin{minipage}[b]{0.48\linewidth}
	\includegraphics[scale=0.4]{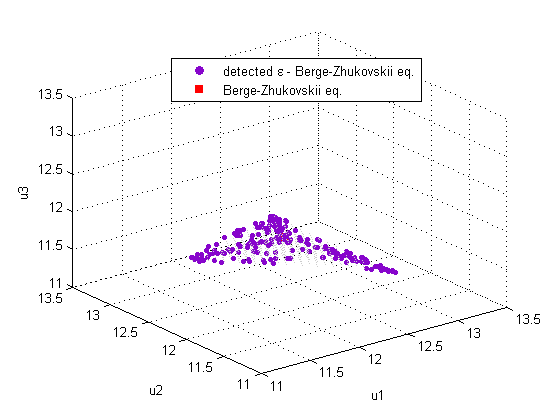}
   	\caption{Voluntary Contribution Mechanism and evolutionary detected solutions for $\epsilon=0.5$.}\label{g3_e05}
\end{minipage}
\hspace{0.04\linewidth}
\begin{minipage}[b]{0.48\linewidth}
	\includegraphics[scale=0.4]{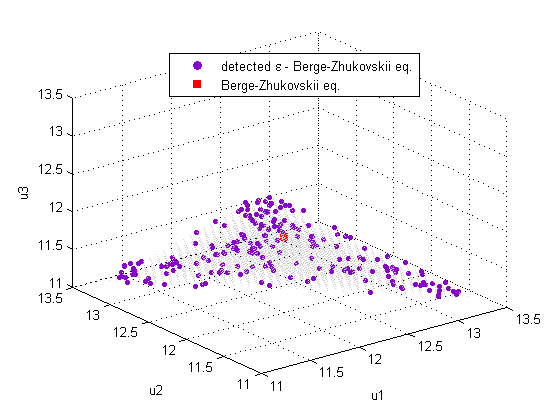}
    \caption{Voluntary Contribution Mechanism and evolutionary detected solutions for $\epsilon=0.9$.}\label{g3_e09}
\end{minipage}
\end{figure}

\paragraph{Discussion}
Numerical experiments presented here illustrates that \textit{BZ-NSGA-II} technique is able to find a good approximation of the $\epsilon$-Berge-Zhukovskii set for different values of $\epsilon$ and for different number of players. 

As it is natural to expect, by increasing the value of $\epsilon$ the number of $\epsilon$-Berge-Zhukovskii equilibria also increases by covering a region (whose shape depends on the payoff function) in the payoffs space that includes the Berge-Zhukovskii equilibrium. In fact, for  $\epsilon=0$ the Berge-Zhukovskii equilibrium is obtained.


\section{Conclusion}

Berge-Zhukovskii equilibrium is a powerful concept specially in trust games. The $\epsilon$-Berge-Zhukovskii equilibrium represents a flexible concept that approximates the BZ equilibrium. It may be also considered as a relaxation of the BZ equilibrium. The $\epsilon$-Berge-Zhukovskii equilibrium is introduced in this paper. A generative relation characterizing the set of $\epsilon$-Berge-Zhukovskii equilibria is also proposed. 

Apart defining the equilibrium concept, a computational method to approach these equilibria is presented. Based on the idea that the equilibrium search and the multi-objective optimization can be considered in the same class of problems, an evolutionary algorithm for multiobjective optimization is adapted for $\epsilon$-Berge-Zhukovskii equilibria detection. Numerical experiments validate the proposed method and confirm the theoretical results presented.

\section{Acknowledgments}

This project was supported by the national project code TE 252 financed by the Romanian Ministry of Education and Research CNCSIS-UEFISCSU. The first author wishes to thank to the "Collegium Talentum". The authors would also like to acknowledge the support received within the OPEN-RES Academic Writing project 212/2012.


%
%

\bibliographystyle{elsarticle-num}

\end{document}